\def\BibTeX{{\rm B\kern-.05em{\sc i\kern-.025em b}\kern-.08em
    T\kern-.1667em\lower.7ex\hbox{E}\kern-.125emX}}
\newtheorem{thm}{Theorem}
\newtheorem{lem}{Lemma}
\newtheorem{defi}{Definition}
\newtheorem{ex}{Example}
\newenvironment{proof}{ \paragraph*{\hspace{-1em}Proof}}{\hfill$\square$}
\def\widebre{\mathpalette\wide@breve}
\def\wide@breve#1#2{\sbox\z@{$#1#2$}%
     \mathop{\vbox{\m@th\ialign{##\crcr
\kern0.08em\brevefill#1{0.8\wd\z@}\crcr\noalign{\nointerlineskip}%
                    $\hss#1#2\hss$\crcr}}}\limits}
\def\brevefill#1#2{$\m@th\sbox\tw@{$#1($}%
  \hss\resizebox{#2}{\wd\tw@}{\rotatebox[origin=c]{90}{\upshape(}}\hss$}
\begin{document}

\title{Trade-offs In Quasi-Decentralized Massive MIMO}
\author{Juan Vidal Alegr\'{i}a\IEEEauthorrefmark{1}, Fredrik Rusek\IEEEauthorrefmark{1}\IEEEauthorrefmark{2}, Jes\'{u}s Rodr\'{i}guez S\'{a}nchez\IEEEauthorrefmark{1}, Ove Edfors\IEEEauthorrefmark{1} \\
\IEEEauthorblockA{\IEEEauthorrefmark{1}Department of Electrical and Information Technology, Lund University, Lund, Sweden\\} 
\IEEEauthorblockA{\IEEEauthorrefmark{2}Sony Research Center, Lund, Sweden\\} 
\{juan.vidal\_alegria@eit.lth.se\}
\thanks{This work has been financed by the Swedish research council (VR) through grant 2017-04829 and by the strategic research program ELLIIT.}
}
\maketitle

\begin{abstract}
Typical massive multiple-input multiple-output (MIMO) architectures consider a centralized approach, in which all baseband data received by each antenna has to be sent to a central processing unit (CPU) to be processed. Due to the enormous amount of antennas expected in massive MIMO base stations (BSs), the number of connections to the CPU required in centralized massive MIMO is not scalable. In recent literature decentralized approaches have been proposed to reduce the number of connections between the antennas and the CPU. However, the reduction in the connections to the CPU requires more outputs per antenna to be generated. We study the trade-off between number of connections to the CPU and number of outputs per antenna. We propose a generalized architecture that allows exploitation of this trade-off, and we define a novel matrix decomposition that allows lossless linear equalization within our proposed architecture.

\end{abstract}

\begin{IEEEkeywords}
Massive MIMO, decentralized processing, linear equalization, matched filter.
\end{IEEEkeywords}

\section{Introduction}
\label{section:intro}
Massive multiple-input multiple-output (MIMO) is a key technology for the new generation of wireless communication systems \cite{marzetta,rusek}. By equipping the base station (BS) with hundreds of antennas, the spacial resolution is considerably increased, providing improved spectral efficiency since a large number of users can be spatially multiplexed and served in the same time-frequency resource.

Practical implementations of massive MIMO \cite{lumami,argos,bigstation} have already confirmed the benefits of massive MIMO. However, they also show that, in order to carry out centralized processing, the interconnection bandwidth between the antennas and the central processing unit (CPU) is prohibitively high since it linearly scales with the number of antennas. For this reason, a recent trend towards more decentralized architectures has emerged \cite{cavallaro,larsson,isit_2019,muris,jesus,vtc}. The general idea is to pre-process the baseband raw data at the antenna end to reduce the transfer of information to the CPU and make it practically scalable as the number of antennas grows large.

Available literature on decentralized massive MIMO proposes a wide range of solutions from fully decentralized, where channel state information (CSI) doesn't have to be available at the CPU \cite{muris, jesus,vtc,feedforward,cavallaro}, to partially decentralized architectures, where some of the processing tasks are distributed, but either full \cite{larsson} or partial CSI \cite{isit_2019} is available at the CPU. 

In \cite{isit_2019} it is argued that an architecture is decentralized enough if it doesn't need extra hardware apart from the minimum required during the payload data phase. It also states that the volume of data transferred during the data phase has to be independent of the number of antennas at the base station. However, as happens in \cite{isit_2019,muris,jesus,vtc}, in order reduce this volume of data, i.e., the number connections between the BS and the CPU, and make it independent of the number of antennas, each antenna has to provide a number of outputs that scales with the number of users (in a centralized architecture we would have only one output per antenna). We notice the existence of a trade-off between the number of connections to the CPU and the number of outputs from each antenna which is worth studying. We propose a generalized architecture that admits a trade-off between number of connections to the CPU and number of outputs per antenna. We define a matrix decomposition that allows the exploitation of this trade-off without any loss in capacity compared to centralized linear equalization.

The rest of the paper is organized as follows. The system model with our proposed generalized architecture is presented in Section \ref{section:model}. In Section \ref{section:wax} we present a novel matrix decomposition that is useful when performing lossless linear equalization within our generalized architecture. Section \ref{section:tradeoff} presents a discussion of the resulting trade-off together with some numerical examples of the defined matrix decomposition. We conclude the paper in Section \ref{section:conclusions} with some final remarks and future work.

Notation: In this paper, lowercase, bold lowercase and bold uppercase
letters stand for scalars, column vectors and matrices, respectively. The
operations $(.)^T$, $(.)^*$ and $(.)^H$ denote transpose, conjugate and conjugate transpose, respectively. The operation $\mathrm{diag}(.)$ gives a block diagonal matrix with the input matrices/vectors as the diagonal blocks. $\mathbf{I}_i$ corresponds to the identity matrix of size $i$, $\boldsymbol{1}_{i\times j}$ denotes the $i \times j$ all ones matrix, and $\boldsymbol{0}_{i \times j}$ denotes the $i \times j$ all zeros matrix.

\section{System model}
\label{section:model}
Let us consider $K$ single-antenna users transmitting to a base-station (BS) with $M$ antennas through a narrow-band channel with IID Rayleigh fading. The $M\times 1$ received complex vector, $\boldsymbol{y}$, can be expressed as
\begin{equation}
\boldsymbol{y} = \boldsymbol{H}\boldsymbol{s} + \boldsymbol{n},
\label{eq:ul_model}
\end{equation}
where $\boldsymbol{H}$ is the $M\times K$ channel matrix, $\boldsymbol{s}$ is the $K \times 1$ vector of symbols transmitted by the users, and $\boldsymbol{n}$ is a zero-mean complex white Gaussian noise vector with sample variance $N_0$. Considering a massive MIMO system we can assume that $M \gg K$.

The front-end of the receiver can, without loss of generality, be selected as spatially-matched filter (MF),
\begin{equation}
\boldsymbol{z} = \boldsymbol{H}^H\boldsymbol{y}.
\end{equation}
MF is lossless in the sense that it maintains the mutual information with respect to the transmitted vector $\boldsymbol{s}$, i.e., it achieves maximum capacity if optimum processing is applied over $\boldsymbol{z}$. Furthermore, linear equalizers, such as the MMSE and the zero-forcing ones, can be implemented based on $\boldsymbol{z}$. For simplicity, we only consider the problem of applying MF in a quasi-decentralized way, where our definition of quasi-decentralized relates only to the number of connections to the CPU required during the data phase.

In a typical centralized massive MIMO scenario with linear equalization, during the data phase, each antenna has to send one complex value to the CPU, corresponding to the entry of $\boldsymbol{y}$ received by that antenna ($y_m$). Therefore, a total of $M$ connections to the CPU are required, which can pose a scalability problem in a massive MIMO scenario where $M$ is expected to be large. The CPU, which has access to full channel state information (CSI) acquired during the training phase, is in charge of applying MF or any other linear equalization. In Figure \ref{fig:arch_old} (left) a schematic of this architecture is shown.

The decentralized massive MIMO architecture presented in \cite{isit_2019} reduces the number of connections to the CPU to $K$, which improves its scalability in the number of antennas. However, each antenna now has to transmit a vector of size $K \times 1$ since it has to multiply its received signal, $y_m$, by the conjugate of the local channel vector, $\boldsymbol{h}^H_m$, before forwarding it to the adding module that combines the antenna outputs and send the resulting $K \times 1$ vector to the CPU. In Figure \ref{fig:arch_old} (right) a schematic of this architecture  is shown. Note that, in this case, the CPU already receives a pre-processed signal, corresponding to the received vector after applying MF, i.e., $\boldsymbol{z}$, and the CPU can perform further processing of this signal.

\begin{figure}[h]
	\centering
		\vspace*{-5mm}
	\includegraphics[scale=0.485]{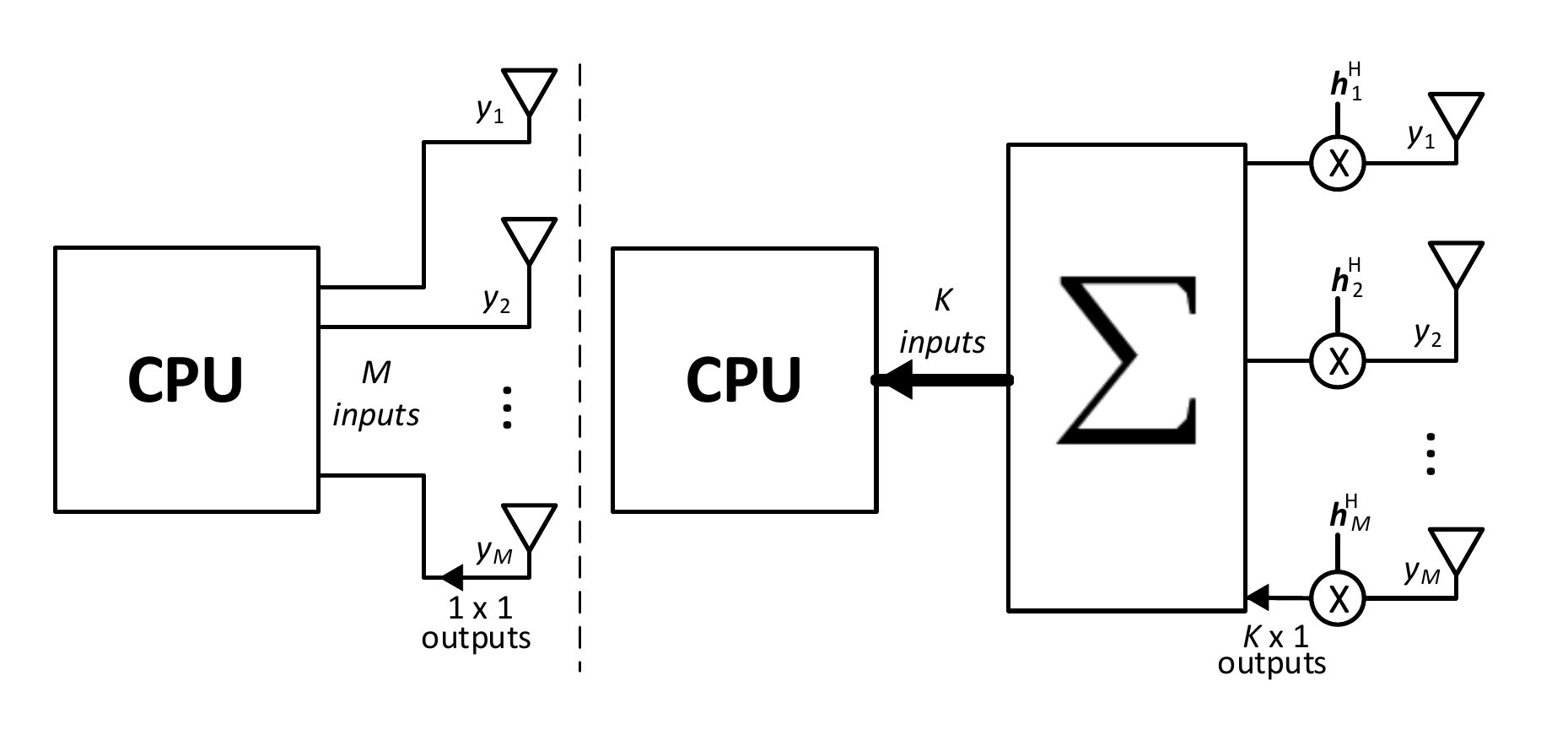}
		\vspace*{-4mm}
\caption{Architecture of centralized massive MIMO (left), and decentralized massive MIMO from \cite{isit_2019} (right).}
	\label{fig:arch_old}
\end{figure}

Let us consider the number of inputs to the CPU, $T$, and the number of outputs per antenna, $L$, as two important design parameters for a massive MIMO system. The values of these two parameters within the two architectures depicted in Figure \ref{fig:arch_old} suggest that there exists a trade-off between these two parameters, as shown in Figure \ref{fig:outputs}, where each of the architectures gives one point in the trade-off between these two parameters. However, it is unclear if a lossless trade-off, maintaining the full MF performance, between the two end-points exists and what it would look like.

\begin{figure}[h]
	\centering
	\includegraphics[scale=0.9]{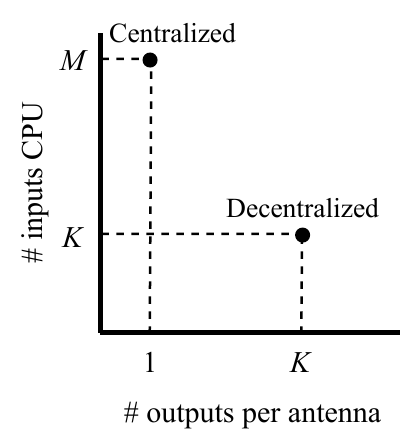}
\caption{Number of inputs to the CPU v.s. number of outputs per antenna for the centralized and decentralized massive MIMO architectures.}
	\label{fig:outputs}
\end{figure}

We propose a generalized architecture, depicted in Figure \ref{fig:arch_new}, that allows to contemplate any combination of $L$ and $T$, including the cases corresponding to the architectures from Figure \ref{fig:arch_old}. Note that, although any combination of $L$ and $T$ is possible within this architecture, it is not obvious whether a certain combination is able to achieve perfect MF. As can be seen in Figure \ref{fig:arch_new}, each antenna multiplies the received scalar value by an $L\times 1$ vector, $\boldsymbol{w}_m^H$, and the outputs are combined through a $T \times ML$ fixed linear operator\footnote[1]{The use of Hermitian is because the front-end receiver we want to obtain is expressed in terms of $\boldsymbol{H}^H$, and this will simplify notation.}, $\boldsymbol{A}^H$, before being forwarded to the CPU. 
\begin{figure}[h]
	\centering
	\vspace*{-5mm}
	\includegraphics[scale=0.485]{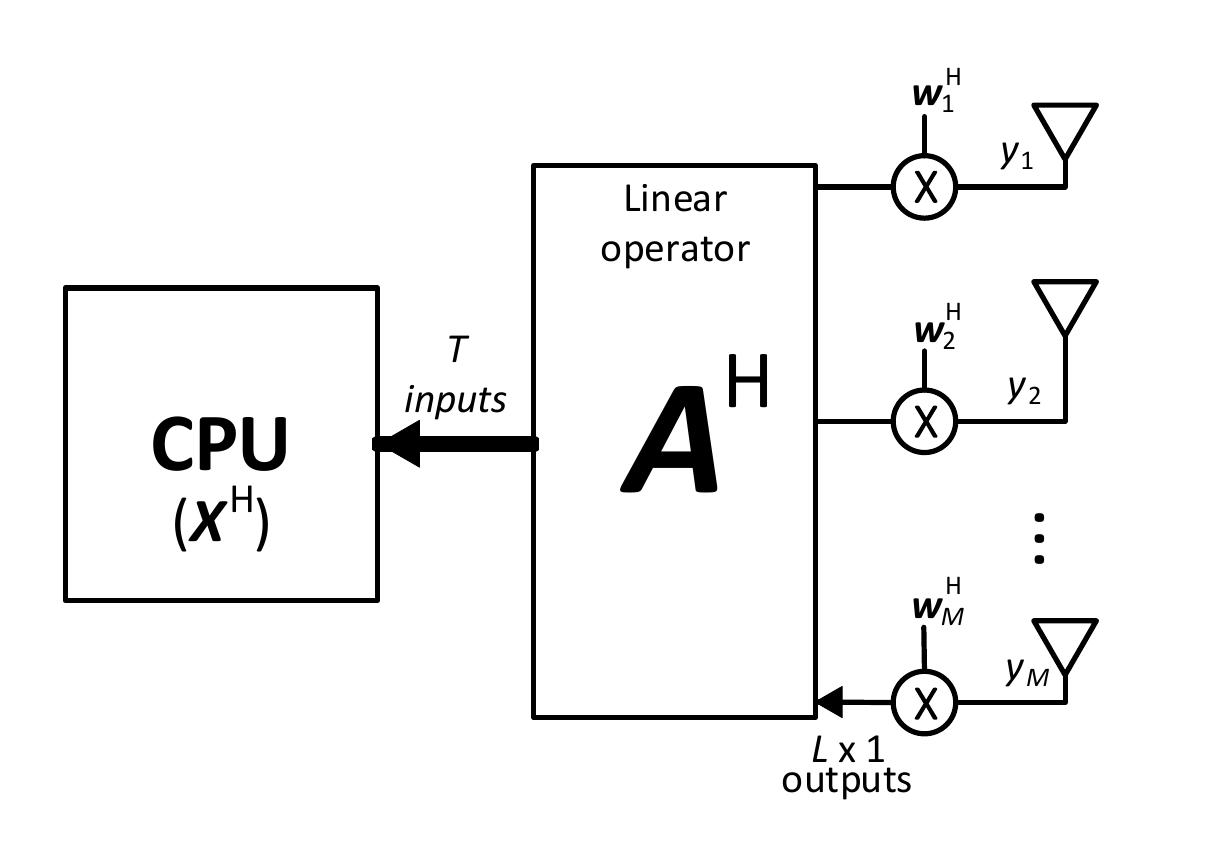}
	\vspace*{-4mm}
	\caption{Generalized architecture considered in this paper.}
	\label{fig:arch_new}
\end{figure}
Assuming that the CPU also applies a linear operator\footnotemark[1], $\boldsymbol{X}^H$, which corresponds to a $K \times T$ matrix, the post-processed signal within our generalized architecture can be expressed as
\begin{equation}
    \boldsymbol{z}=\boldsymbol{X}^H\boldsymbol{A}^H\boldsymbol{W}^H\boldsymbol{y},
\end{equation}
where $\boldsymbol{W}$ is an $M \times ML$ block diagonal matrix
\begin{equation}\label{eq:w}
    \boldsymbol{W}= \mathrm{diag}\left( \boldsymbol{w}_1, \boldsymbol{w}_2, \dots, \boldsymbol{w}_M \right).
\end{equation}
We should note that, within the defined architecture, we have absolute freedom in selecting $\boldsymbol{W}$ and $\boldsymbol{X}$, i.e., they can change from one time block to another, if $\boldsymbol{H}$ changes. However, $\boldsymbol{A}$ is seen as a system parameter that can be freely chosen when designing the system, but it is otherwise considered constant.

We are interested in applying a lossless transformation to the received signal, i.e., preserving the mutual information with respect to the transmitted signal $\boldsymbol{s}$, within our restricted architecture so that the architecture doesn't affect the performance of the system. As mentioned above, the MF is the simplest option to do so, and other equalization methods would still be possible after it in the CPU (assuming full CSI). Therefore, our architecture would be able to operate lossless if we can fulfill the equality
\begin{equation} \label{XAW}
    \boldsymbol{X}^H\boldsymbol{A}^H\boldsymbol{W}^H = \boldsymbol{H}^H.
\end{equation}
We will next investigate necessary conditions on the parameters $(M,K,T,L)$ and the matrix $\boldsymbol{A}$ for the equation \eqref{XAW}, with $\boldsymbol{W}$ and $\boldsymbol{X}$ as unknowns, to be solvable for any $\boldsymbol{H}$.
\section{WAX decomposition}
\label{section:wax}


Given an arbitrary $M \times K$ matrix, $\boldsymbol{H}$, and a fixed $M \times T$ matrix, $\boldsymbol{A}$, we define the WAX decomposition of $\boldsymbol{H}$ as
\begin{equation} \label{WAX}
  \boldsymbol{H} =  \boldsymbol{W}\boldsymbol{A}\boldsymbol{X}.
\end{equation}
This decomposition relates directly to (\ref{XAW}) taking conjugate transpose, i.e., $\boldsymbol{W}$ has the structure defined in (\ref{eq:w}) and $\boldsymbol{X}$ is a $T \times K$ matrix.

The generality of the WAX decomposition is established in the following theorem.
\begin{thm} \label{WAXmain}
Provided that 
\begin{equation}\label{eq:cond_wax}
T>\max\left(M\frac{K-L}{K},K-1\right)
\end{equation}
and with a randomly chosen $\boldsymbol{A}\in\mathbb{C}^{ML\times T}$, the set of matrices $\boldsymbol{H}$ that does not admit a decomposition of the form \eqref{WAX} has measure 0.
\end{thm}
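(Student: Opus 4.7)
The plan is to rephrase the existence of a WAX decomposition of $\boldsymbol{H}$ as membership in the image of the polynomial map
\[
\Phi:(\boldsymbol{W},\boldsymbol{X})\longmapsto \boldsymbol{W}\boldsymbol{A}\boldsymbol{X},
\]
where $\boldsymbol{W}$ ranges over the block-diagonal matrices of the form \eqref{eq:w} and $\boldsymbol{X}\in\mathbb{C}^{T\times K}$ is unrestricted. The image of $\Phi$ is a constructible subset of $\mathbb{C}^{M\times K}$, so its complement is of Lebesgue measure zero as soon as the image has the full dimension $MK$. By the generic-rank theorem, this reduces to exhibiting a single point $(\boldsymbol{W}_0,\boldsymbol{X}_0)$ at which the differential
\[
d\Phi(\delta\boldsymbol{W},\delta\boldsymbol{X}) = \delta\boldsymbol{W}\,\boldsymbol{A}\boldsymbol{X}_0 + \boldsymbol{W}_0\boldsymbol{A}\,\delta\boldsymbol{X}
\]
is surjective, with $\delta\boldsymbol{W}$ constrained to the block-diagonal tangent space.

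Surjectivity is easiest to check on the dual side. A short trace calculation shows that the adjoint sends a codirection $\boldsymbol{\Lambda}\in\mathbb{C}^{M\times K}$, with $m$-th row $\boldsymbol{\lambda}_m^{T}$, to the pair of linear systems
\[
\boldsymbol{A}_m\boldsymbol{X}_0\boldsymbol{\lambda}_m = 0\ \ (m=1,\dots,M), \qquad (\boldsymbol{W}_0\boldsymbol{A})^{T}\boldsymbol{\Lambda}=0,
\]
where $\boldsymbol{A}_m$ is the $m$-th $L\times T$ block of $\boldsymbol{A}$. It therefore suffices to show that the joint null space $V_a\cap V_b\subseteq\mathbb{C}^{MK}$ of these two systems is trivial.

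For generic $\boldsymbol{A}$, $\boldsymbol{X}_0$, $\boldsymbol{W}_0$ the factor matrices attain their maximal ranks, yielding $\dim V_a = M\max(0,K-L)$ and $\dim V_b = K\max(0,M-T)$. When $L\ge K$ or $T\ge M$ one of the subspaces is already trivial and there is nothing more to prove. In the remaining regime $L<K$ and $T<M$, the combined dimension is $2MK-ML-KT$, which is strictly less than $MK$ precisely when $T>M(K-L)/K$; the companion hypothesis $T>K-1$, i.e.\ $T\ge K$, is exactly what lets a generic $\boldsymbol{X}_0$ have rank $K$ so that $\boldsymbol{A}_m\boldsymbol{X}_0$ attains rank $L$ and the dimension count is sharp. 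Under these two inequalities the expected intersection dimension is negative, predicting $V_a\cap V_b=\{0\}$.

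The principal obstacle is that $V_a$ and $V_b$ both depend on $\boldsymbol{A}$, so one cannot cite a generic-position lemma for two independently chosen subspaces; the dimension count must be promoted to an actual emptiness result. I would do this by exhibiting an explicit witness: pick $\boldsymbol{A}_\star$ whose blocks $(\boldsymbol{A}_\star)_m$ are made of appropriately interlaced standard-basis columns padded with a few generic entries, and then choose $\boldsymbol{W}_\star,\boldsymbol{X}_\star$ so that both systems decouple into a finite collection of small blocks that can be analyzed by inspection. Once $V_a\cap V_b=\{0\}$ is verified at this one witness, the non-vanishing of an $MK\times MK$ minor of the full Jacobian of $\Phi$ is a single polynomial inequality in the entries of $\boldsymbol{A}$; it cuts out a Zariski-open, hence full-measure, set of admissible $\boldsymbol{A}$, concluding the proof.
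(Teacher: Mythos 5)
Your route is genuinely different from the paper's. The paper first proves a reduction lemma (Lemma~\ref{lemma_t1}): if the \emph{linear} system $\boldsymbol{A}\boldsymbol{X}=\bar{\boldsymbol{W}}^{H}\boldsymbol{H}$ is solvable in $(\boldsymbol{X},\bar{\boldsymbol{W}})$ with every $\|\bar{\boldsymbol{w}}_m\|^2>0$, then $\boldsymbol{W}=(\bar{\boldsymbol{W}}\bar{\boldsymbol{W}}^{H})^{-1}\bar{\boldsymbol{W}}$ solves \eqref{WAX}; it then counts equations against the $TK+ML$ unknowns and argues, via the fact that the coefficients expressing $\bar{w}_{m,\ell}$ in the free variables are rational in $(\boldsymbol{A},\boldsymbol{H})$, that the admissible set is Zariski open and hence of full measure once nonempty. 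You instead keep the problem bilinear, view $\Phi(\boldsymbol{W},\boldsymbol{X})=\boldsymbol{W}\boldsymbol{A}\boldsymbol{X}$ as an algebraic map from the affine space of block-diagonal $\boldsymbol{W}$'s times $\mathbb{C}^{T\times K}$, and reduce to surjectivity of $d\Phi$ at a single point via Chevalley's theorem plus the generic-rank theorem. Your adjoint computation is correct, and the count $\dim V_a+\dim V_b=2MK-ML-KT<MK$ iff $T>M(K-L)/K$ cleanly explains where both terms of \eqref{eq:cond_wax} come from; this is an attractive reformulation.

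The gap is that you never actually produce the witness. The inference from ``$\dim V_a+\dim V_b<MK$'' to ``$V_a\cap V_b=\{0\}$'' is only a heuristic, as you yourself flag: $V_a$ is the row-wise constraint $\boldsymbol{\lambda}_m\in\ker(\boldsymbol{A}_m\boldsymbol{X}_0)$ and $V_b$ is the column-wise constraint that every column of $\boldsymbol{\Lambda}$ lie in the left null space of $\boldsymbol{W}_0\boldsymbol{A}$, and both subspaces are functions of the same $\boldsymbol{A}$, so no generic-position lemma applies. Everything therefore hinges on exhibiting one tuple $(\boldsymbol{A}_\star,\boldsymbol{W}_\star,\boldsymbol{X}_\star)$ with trivial intersection, and the sentence about ``interlaced standard-basis columns padded with a few generic entries'' is a plan, not a construction; at the boundary value $T=\lfloor M(K-L)/K\rfloor+1$ the count is tight, so the decoupled witness must achieve exact transversality for all admissible $(M,K,L,T)$, which is precisely the delicate part. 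To be fair, the paper's own proof contains a structurally parallel unverified step --- it invokes ``whenever $\mathcal{V}\neq\emptyset$'' without exhibiting a member of $\mathcal{V}$ --- but as submitted your argument is incomplete at its crux: you must either carry out the witness construction or give an independent argument that $V_a\cap V_b=\{0\}$ for some (hence generic) choice of $\boldsymbol{A}$.
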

\begin{proof}
See Appendix A.
\end{proof}

While Theorem \ref{WAXmain} states that any randomly chosen $\boldsymbol{A}$ works for a WAX decomposition, we are, from a practical perspective, interested in $\boldsymbol{A}$ matrices having simple forms (providing low computational complexity), for example, sparse matrices with elements in the set $\{0,1\}$. This would significantly simplify the combining network shown as $\boldsymbol{A}^H$ in Figure \ref{fig:arch_new}. However, for such a matrix, Theorem \ref{WAXmain} does no longer apply. As such, it is of importance to investigate the exceptions to Theorem \ref{WAXmain}. 

The full analysis of the exceptions is deferred to a subsequent paper, and in this paper, we will limit our investigation to a constrained form of $\boldsymbol{A}$. The form we consider is 
\begin{equation}\label{eq:A_block}
    \boldsymbol{A} = \widetilde{\mathbf{I}}_L \widetilde{\boldsymbol{A}},
\end{equation}
where $\widetilde{\boldsymbol{A}}$ is an arbitrary $M \times T$ matrix, and $\widetilde{\mathbf{I}}_L$ is a block matrix of the form
\begin{equation}
\widetilde{\mathbf{I}}_L = \mathbf{I}_N \otimes (\boldsymbol{1}_{L\times 1}\otimes \mathbf{I}_L),
\end{equation}
with $N=M/L$. Note that, for this structure to apply, $M/L$ has to evaluate to an integer value. This structure still allows for a sparse structure of $\boldsymbol{A}$ comprising only $\{0,1\}$ elements, and simplifies the analysis. Furthermore, the particular form would arise naturally in case the antennas are clustered into clusters of size $L$, as we elaborate more on shortly.
Substituting \eqref{eq:A_block} into \eqref{WAX} yields the decomposition
\begin{equation}\label{eq:wax2}
    \boldsymbol{H}=\widetilde{\boldsymbol{W}}\widetilde{\boldsymbol{A}}\boldsymbol{X},
\end{equation}
where $\widetilde{\boldsymbol{W}}$ is now an $M \times M$ matrix
\begin{equation}
    \widetilde{\boldsymbol{W}} = \boldsymbol{W} \widetilde{\mathbf{I}}_L,
\end{equation}
which corresponds to a block diagonal matrix of the form
\begin{equation}
    \widetilde{\boldsymbol{W}}=\mathrm{diag}\left( \widetilde{\boldsymbol{W}}_1,\widetilde{\boldsymbol{W}}_2,\dots,\widetilde{\boldsymbol{W}}_N \right),
\end{equation}
where $\widetilde{\boldsymbol{W}}_n$ are $L\times L$ matrices, and their rows  correspond to the vectors $\boldsymbol{w}_m$ from \eqref{eq:w}. If the antennas would be clustered into clusters of size $L$, then \eqref{eq:wax2} would appear naturally, which further motivates our limitation \eqref{eq:A_block}.

As happened in Theorem \ref{WAXmain} for the decomposition \eqref{WAX}, the set of $\boldsymbol{H}$ matrices that do not admit a decomposition of the form \eqref{eq:wax2} also has measure 0, if we have a randomly chosen $\widetilde{\boldsymbol{A}}$ and condition \eqref{eq:cond_wax} is fulfilled. We omit this proof for sake of space constraints, but mention that it follows the steps in the proof of Theorem \ref{WAXmain}, where in this case the constraint to be satisfied comes in the form of a determinant operation instead of a norm as in Lemma \ref{lemma_t1}.

Let us define $\boldsymbol{H}=[\boldsymbol{H}_1^T \, \boldsymbol{H}_2^T \, \dots \boldsymbol{H}_N^T]^T$ and $\widetilde{\boldsymbol{A}}=[\widetilde{\boldsymbol{A}}_1^T \, \widetilde{\boldsymbol{A}}_2^T \, \dots \widetilde{\boldsymbol{A}}_N^T]^T$, where $\boldsymbol{H}_n$ are $L \times K$ blocks and $\widetilde{\boldsymbol{A}}_n$ are $L \times T$ blocks. For practical computation of \eqref{eq:wax2}, the following lemma is useful.
\begin{lem} \label{lemeqv}
For all matrices $\boldsymbol{H}$ satisfying $\mathrm{rank}(\boldsymbol{H}_n)=L$ there exists a block diagonal matrix $\widetilde{\boldsymbol{W}}$ and a matrix $\boldsymbol{X}$ such that $\widetilde{\boldsymbol{W}}\widetilde{\boldsymbol{A}}\boldsymbol{X}=\boldsymbol{H}$, if and only if, there exists  a block diagonal invertible matrix $\widehat{\boldsymbol{W}}$ such that $\widetilde{\boldsymbol{A}}\boldsymbol{X}-\widehat{\boldsymbol{W}}\boldsymbol{H}=\boldsymbol{0}_{M\times K}$.
\end{lem}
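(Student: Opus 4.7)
The plan is to prove the two implications separately, the main work being to show that the rank hypothesis on the blocks $\boldsymbol{H}_n$ forces the $L\times L$ diagonal blocks of $\widetilde{\boldsymbol{W}}$ to be invertible, after which the equivalence reduces to simply inverting a block diagonal matrix.

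First I would handle the forward direction. Suppose $\widetilde{\boldsymbol{W}}\widetilde{\boldsymbol{A}}\boldsymbol{X}=\boldsymbol{H}$ for some block diagonal $\widetilde{\boldsymbol{W}}=\mathrm{diag}(\widetilde{\boldsymbol{W}}_1,\ldots,\widetilde{\boldsymbol{W}}_N)$ and some $\boldsymbol{X}$. Reading off the $n$-th row block gives $\widetilde{\boldsymbol{W}}_n \widetilde{\boldsymbol{A}}_n \boldsymbol{X} = \boldsymbol{H}_n$. Since rank is subadditive under products, $L=\mathrm{rank}(\boldsymbol{H}_n)\leq \mathrm{rank}(\widetilde{\boldsymbol{W}}_n)\leq L$, so every $\widetilde{\boldsymbol{W}}_n$ is an invertible $L\times L$ matrix, hence $\widetilde{\boldsymbol{W}}$ itself is invertible with block diagonal inverse. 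Setting $\widehat{\boldsymbol{W}} := \widetilde{\boldsymbol{W}}^{-1} = \mathrm{diag}(\widetilde{\boldsymbol{W}}_1^{-1},\ldots,\widetilde{\boldsymbol{W}}_N^{-1})$ yields a block diagonal invertible matrix that satisfies $\widetilde{\boldsymbol{A}}\boldsymbol{X} = \widehat{\boldsymbol{W}} \boldsymbol{H}$, i.e., $\widetilde{\boldsymbol{A}}\boldsymbol{X} - \widehat{\boldsymbol{W}}\boldsymbol{H} = \boldsymbol{0}_{M\times K}$.

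For the converse direction, suppose $\widetilde{\boldsymbol{A}}\boldsymbol{X} = \widehat{\boldsymbol{W}}\boldsymbol{H}$ for some block diagonal invertible $\widehat{\boldsymbol{W}}$ (with $L\times L$ diagonal blocks, matching the structure of $\widetilde{\boldsymbol{W}}$). Define $\widetilde{\boldsymbol{W}} := \widehat{\boldsymbol{W}}^{-1}$; because the inverse of an invertible block diagonal matrix is block diagonal with the same block sizes, $\widetilde{\boldsymbol{W}}$ has the required form. Left-multiplying the assumed identity by $\widetilde{\boldsymbol{W}}$ gives $\widetilde{\boldsymbol{W}}\widetilde{\boldsymbol{A}}\boldsymbol{X} = \boldsymbol{H}$, which is the desired equality.

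The only nontrivial step is the rank argument in the forward direction, which is where the hypothesis $\mathrm{rank}(\boldsymbol{H}_n)=L$ is needed to guarantee invertibility of each block of $\widetilde{\boldsymbol{W}}$; without it the equivalence could fail, since a singular $\widetilde{\boldsymbol{W}}_n$ would not admit inversion to produce $\widehat{\boldsymbol{W}}$. Everything else is purely algebraic manipulation on block diagonal matrices.
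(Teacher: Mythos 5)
Your proof is correct and follows essentially the same route as the paper's: split the $n$-th row blocks, use $\mathrm{rank}(\boldsymbol{H}_n)=L$ to force each $\widetilde{\boldsymbol{W}}_n$ to be invertible, and pass between $\widetilde{\boldsymbol{W}}$ and $\widehat{\boldsymbol{W}}$ by block-wise inversion. You spell out a few details the paper leaves implicit (the rank subadditivity inequality and that the inverse of a block diagonal matrix is block diagonal), but the argument is the same.
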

\begin{proof}
Assume existence of block diagonal matrix $\widetilde{\boldsymbol{W}}$ and a matrix $\boldsymbol{X}$ such that $\widetilde{\boldsymbol{W}}\widetilde{\boldsymbol{A}}\boldsymbol{X}=\boldsymbol{H}$. This is equivalent to
$$\widetilde{\boldsymbol{W}}_n\widetilde{\boldsymbol{A}}_n\boldsymbol{X}=\boldsymbol{H}_n, \; 1\leq n \leq N.$$ 
Since, by assumption, $\mathrm{rank}(\boldsymbol{H}_n)=L$, it follows that $\mathrm{rank}(\widetilde{\boldsymbol{W}}_n)=L, \; \forall n$, making the matrix $\widetilde{\boldsymbol{W}}$ invertible.

The reverse statement is trivial; if an invertible $\widehat{\boldsymbol{W}}$ exists, then we can set $\widetilde{\boldsymbol{W}}=\widehat{\boldsymbol{W}}^{-1}.$
\end{proof}

For a randomly chosen $\boldsymbol{H}$, the condition $\mathrm{rank}(\boldsymbol{H}_n)=L$ holds with probability 1. We can then compute the WAX decomposition by invoking Lemma \ref{lemeqv}, which yields the linear system
\begin{equation}\label{eq:AXWH}
    \widetilde{\boldsymbol{A}}\boldsymbol{X}-\widetilde{\boldsymbol{W}}^{-1}\boldsymbol{H}=\boldsymbol{0}_{M\times K}.
\end{equation}
Using the vectorization operator we get the an equivalent linear system of equations
\begin{equation}\label{eq:lin_wax}
\boldsymbol{B}\boldsymbol{u}=\boldsymbol{0}_{MK\times 1},
\end{equation}
where $\boldsymbol{u}$ corresponds to the $(TK+ML)\times 1$ vector of unknowns,
\begin{equation}
\boldsymbol{u}=\begin{bmatrix}
\mathrm{vec}(\boldsymbol{X}) \\
\mathrm{vec}(\widetilde{\boldsymbol{W}}_1) \\
\vdots \\
\mathrm{vec}(\widetilde{\boldsymbol{W}}_N)
\end{bmatrix},
\end{equation}
and $\boldsymbol{B}$ is an $M K \times (TK+ML)$ block matrix of the form $\boldsymbol{B}=[\boldsymbol{B}_1 \;\boldsymbol{B}_2]$ resulting from the vectorization operation, with
\begin{equation} \label{eq:B1}
\boldsymbol{B}_1 = \mathbf{I}_K \otimes \widetilde{\boldsymbol{A}}, \quad \boldsymbol{B}_2 = -(\boldsymbol{H}^\mathrm{T} \otimes \mathbf{I}_M)\mathbf{P},
\end{equation}
where $\mathbf{P}$ is an $M^2\times ML$ block matrix composed of identity matrices, $\boldsymbol{I}_{L}$, separated by rows of zeros so as to disregard the zeros in $\mathrm{vec}(\boldsymbol{W})$. The solution to \eqref{eq:lin_wax} can be found by setting $\boldsymbol{u}$ to be any vector in the null-space of $\boldsymbol{B}$, which will always be non-zero if condition \eqref{eq:cond_wax} is met. Then we can obtain the corresponding $\widetilde{\boldsymbol{W}}^{-1}$ and $\boldsymbol{X}$ from $\boldsymbol{u}$ through inverse vectorization, and we should check that the resulting $\widetilde{\boldsymbol{W}}^{-1}$ is full rank so that we can obtain $\widetilde{\boldsymbol{W}}$ taking the matrix inverse. For a randomly chosen $\widetilde{\boldsymbol{A}}$, $\widetilde{\boldsymbol{W}}^{-1}$ is full rank with probability 1, but note that some $\widetilde{\boldsymbol{A}}$ matrices may lead to rank deficient $\widetilde{\boldsymbol{W}}^{-1}$ even if \eqref{eq:cond_wax} is met. That is, for a poorly chosen matrix $\widetilde{\boldsymbol{A}}$, a WAX decomposition of a matrix $\boldsymbol{H}$ cannot be performed. In what follows next, we study necessary conditions on $\widetilde{\boldsymbol{A}}$ in order for the WAX decomposition to be feasible.

\subsection{Conditions on matrix $\widetilde{\boldsymbol{A}}$}
We aim for a sparse $\boldsymbol{A}$ matrix with elements in the set $\{0,1\}$, instead of purely random as in Theorem \ref{WAXmain}. As earlier mentioned, the structure from \eqref{eq:A_block} maintains the sparsity properties of $\widetilde{\boldsymbol{A}}$, since it leads to an $\boldsymbol{A}$ matrix with repeated $\widetilde{\boldsymbol{A}}$ blocks. Therefore, we next study  conditions on $\widetilde{\boldsymbol{A}}$ for the decomposition \eqref{eq:wax2}, and thus \eqref{WAX}, to be possible.

\begin{defi}
We consider $\widetilde{\boldsymbol{A}}$ to be valid for the decomposition \eqref{eq:wax2} if the set of matrices $\boldsymbol{H}$ that does not admit such a decomposition has measure 0.
\end{defi}

It can be noticed that a valid $\widetilde{\boldsymbol{A}}$ leads also to a valid $\boldsymbol{A}$ for the decomposition \eqref{WAX} through \eqref{eq:A_block}.

\begin{lem} \label{lemma_blockrank}
If $\widetilde{\boldsymbol{A}}$ is valid, then
\begin{equation}\label{eq:rank_A}
    \mathrm{rank}(\widetilde{\boldsymbol{A}}_\ell) = L, \;\; \forall \ell.
\end{equation}
\end{lem}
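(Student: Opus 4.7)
The plan is a direct proof by contrapositive: assume $\mathrm{rank}(\widetilde{\boldsymbol{A}}_{\ell_0})<L$ for some index $\ell_0$, and show that the set of $\boldsymbol{H}$ admitting \eqref{eq:wax2} has Lebesgue measure $0$, which contradicts the assumed validity of $\widetilde{\boldsymbol{A}}$.

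First I would exploit the block-diagonal structure of $\widetilde{\boldsymbol{W}}$ to decouple \eqref{eq:wax2} into the per-block equations
\[
\widetilde{\boldsymbol{W}}_n\widetilde{\boldsymbol{A}}_n\boldsymbol{X}=\boldsymbol{H}_n,\quad 1\leq n \leq N,
\]
exactly as in the first step of the proof of Lemma \ref{lemeqv}. Applying sub-multiplicativity of rank to block $\ell_0$ then gives
\[
\mathrm{rank}(\boldsymbol{H}_{\ell_0})\;\leq\;\mathrm{rank}(\widetilde{\boldsymbol{A}}_{\ell_0})\;<\;L,
\]
so a necessary condition for a decomposition to exist is that the $L\times K$ block $\boldsymbol{H}_{\ell_0}$ of $\boldsymbol{H}$ be row-rank deficient.

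Next I would invoke the standard fact that the set of $L\times K$ matrices of rank strictly less than $\min(L,K)$ is a proper Zariski-closed subset of $\mathbb{C}^{L\times K}$, cut out by the vanishing of all $\min(L,K)\times\min(L,K)$ minors, and therefore has Lebesgue measure zero. In the operational regime $L\leq K$ (the relevant range for the trade-off between the two extremal architectures in Figure \ref{fig:outputs}), this yields $\mathrm{rank}(\boldsymbol{H}_{\ell_0})=L$ with probability one for a random $\boldsymbol{H}$. Consequently, the set of $\boldsymbol{H}$ admitting \eqref{eq:wax2} would be contained in a measure-zero set, contradicting validity of $\widetilde{\boldsymbol{A}}$, and the lemma follows.

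There is no real obstacle beyond keeping track of the block structure; the only point requiring care is the silent use of $L\leq K$, without which an $L\times K$ block cannot attain row-rank $L$ in the first place. If one wished to cover $L>K$ as well, the same argument would deliver the slightly weaker conclusion $\mathrm{rank}(\widetilde{\boldsymbol{A}}_\ell)\geq \min(L,K)$, but within the parameter range targeted by this paper no such extension is needed.
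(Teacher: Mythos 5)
Your proof is correct and follows essentially the same route as the paper's: decouple \eqref{eq:wax2} into the per-block equations $\widetilde{\boldsymbol{W}}_\ell\widetilde{\boldsymbol{A}}_\ell\boldsymbol{X}=\boldsymbol{H}_\ell$, apply the rank inequality for products, and use the fact that a random $L\times K$ block $\boldsymbol{H}_\ell$ has rank $L$ with probability one; the only difference is your contrapositive phrasing versus the paper's direct statement. Your explicit remark that the argument silently requires $L\leq K$ is a fair point of added care, consistent with the operational regime the paper considers.
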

\begin{proof}
For a valid $\widetilde{\boldsymbol{A}}$, we have
$\widetilde{\boldsymbol{W}}_\ell\widetilde{\boldsymbol{A}}_\ell\boldsymbol{X}=\boldsymbol{H}_\ell, \;\; \forall \ell.
$
For a random $\boldsymbol{H}$, $\mathrm{rank}(\boldsymbol{H}_\ell) = L, \;\; \forall \ell$
with probability 1. Since
\begin{equation}
    \mathrm{rank}(\widetilde{\boldsymbol{W}}_\ell\widetilde{\boldsymbol{A}}_\ell\boldsymbol{X})\leq \mathrm{rank}(\widetilde{\boldsymbol{W}}_\ell)\mathrm{rank}(\widetilde{\boldsymbol{A}}_\ell)\mathrm{rank}(\boldsymbol{X}),
\end{equation}
condition \eqref{eq:rank_A} must be fulfilled.
\end{proof}

A further necessary condition for $\widetilde{\boldsymbol{A}}$ to be valid is given in Theorem \ref{rankrow}.
\begin{thm} \label{rankrow}
Let $\widetilde{\boldsymbol{A}}_0$ be a submatrix of $\widetilde{\boldsymbol{A}}$ formed by selecting $R$ rows from $\widetilde{\boldsymbol{A}}$, where all rows are in different blocks $\widetilde{\boldsymbol{A}}_n$. If $\widetilde{\boldsymbol{A}}$ is valid, then
$$\mathrm{rank}(\widetilde{\boldsymbol{A}}_0) > R\frac{K-L}{K}$$
\end{thm}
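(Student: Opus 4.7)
The plan is to reformulate the existence of the decomposition as a column-space containment condition via Lemma~\ref{lemeqv}, and then restrict attention to the left null vectors of $\widetilde{\boldsymbol{A}}$ that are supported only on the $R$ selected rows. This restriction collapses the problem to a clean linear system on the $R$ rows of $\widehat{\boldsymbol{W}}=\widetilde{\boldsymbol{W}}^{-1}$ picked out by the selection, and a dimension count on that system delivers the bound.

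Concretely, Lemma~\ref{lemeqv} lets me replace ``$\boldsymbol{H}=\widetilde{\boldsymbol{W}}\widetilde{\boldsymbol{A}}\boldsymbol{X}$ with block diagonal $\widetilde{\boldsymbol{W}}$'' by the equivalent requirement that there exist an invertible block diagonal $\widehat{\boldsymbol{W}}$ satisfying $\widetilde{\boldsymbol{A}}\boldsymbol{X}=\widehat{\boldsymbol{W}}\boldsymbol{H}$. Reading off the row of this identity at each selected position $m_i=(n_i-1)L+\ell_i$ gives
\begin{equation*}
(\widetilde{\boldsymbol{A}}_0)_{i,:}\,\boldsymbol{X}=\boldsymbol{r}_i^T\boldsymbol{H}_{n_i},\qquad i=1,\dots,R,
\end{equation*}
where $\boldsymbol{r}_i^T=(\widehat{\boldsymbol{W}}_{n_i})_{\ell_i,:}$ is the row of $\widehat{\boldsymbol{W}}_{n_i}$ picked out by the $i$-th selection. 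Contracting with any $\boldsymbol{y}_0$ in the left null space of $\widetilde{\boldsymbol{A}}_0$ annihilates $\boldsymbol{X}$ and leaves the homogeneous constraint
\begin{equation*}
\sum_{i=1}^{R}(\boldsymbol{y}_0)_i\,\boldsymbol{r}_i^T\boldsymbol{H}_{n_i}=\boldsymbol{0}^T
\end{equation*}
on the $RL$-dimensional unknown $(\boldsymbol{r}_1,\dots,\boldsymbol{r}_R)$. Writing $r=\mathrm{rank}(\widetilde{\boldsymbol{A}}_0)$ and sweeping $\boldsymbol{y}_0$ through a basis of the $(R-r)$-dimensional left null space yields $(R-r)K$ scalar constraints, and crucially neither $\boldsymbol{X}$ nor any other row of $\widehat{\boldsymbol{W}}$ appears in them.

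Validity of $\widetilde{\boldsymbol{A}}$ then forces, on a measure-positive set of $\boldsymbol{H}$, each $\boldsymbol{r}_i$ to be nonzero, since a zero $\boldsymbol{r}_i$ would render $\widehat{\boldsymbol{W}}_{n_i}$ singular. Hence the solution set of the homogeneous system above must contain vectors with every block $\boldsymbol{r}_i$ nonzero, which forces it to have strictly positive dimension in $\mathbb{C}^{RL}$ and to lie generically outside the codim-$L$ subspaces $\{\boldsymbol{r}_i=\boldsymbol{0}\}$. Positive dimension alone demands $RL>(R-r)K$, which rearranges to the claimed bound $\mathrm{rank}(\widetilde{\boldsymbol{A}}_0)>R(K-L)/K$.

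The step I expect to be the main obstacle is the genericity claim underpinning the dimension count: that for $\boldsymbol{H}$ outside a set of measure zero the $(R-r)K$ equations coming from a basis of the left null space of $\widetilde{\boldsymbol{A}}_0$ are linearly independent, and moreover that their common solution space avoids the hyperplanes $\{\boldsymbol{r}_i=\boldsymbol{0}\}$. I would handle this by exhibiting a single $\boldsymbol{H}$ whose blocks $\boldsymbol{H}_{n_i}$ make the coefficient matrix full rank and the kernel suitably generic, then invoking the standard ``determinant-not-identically-zero'' polynomial argument used in the proof of Theorem~\ref{WAXmain} to propagate the conclusion to almost every $\boldsymbol{H}$.
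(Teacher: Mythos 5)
Your proposal is correct and takes essentially the same route as the paper's Appendix B: you isolate the $K$ copies of each selected row, annihilate $\boldsymbol{X}$ by contracting with the left null space of $\widetilde{\boldsymbol{A}}_0$ (the paper does exactly this via a QR decomposition, extracting the $K(R-\kappa)$ all-zero rows of $\boldsymbol{I}_K\otimes\boldsymbol{R}_0$), and then count $RL$ unknowns against $(R-\kappa)K$ generically independent homogeneous equations to force $RL>K(R-\kappa)$. The differences are purely presentational, including your (slightly more careful) handling of the genericity of the coefficient matrix, which the paper simply asserts.
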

\begin{proof}
See Appendix B.
\end{proof}

We point out that since $\widetilde{\boldsymbol{A}}$ is an $M\times T$ matrix, $\mathrm{rank}(\widetilde{\boldsymbol{A}}_0)$ cannot exceed $T$. However, with $T>M(K-L)/K$, it is guaranteed that $R\frac{K-L}{K}<T$.  

An immediate consequence of Theorem \ref{rankrow} is that a block $\boldsymbol{A}_\ell$ cannot be repeated arbitrarily often. In fact, a block $\widetilde{\boldsymbol{A}}_\ell$ must have, from
Lemma \ref{lemma_blockrank}, rank $L$. Repeating the block $\widetilde{\boldsymbol{A}}_\ell$ $r$ times in $\widetilde{\boldsymbol{A}}$, and selecting $\widetilde{\boldsymbol{A}}_0$ as the same row within each of these $r$ blocks yields,
$$1>r\frac{K-L}{K},$$
which implies $r<\frac{K}{K-L}.$  Whenever $L\leq K/2$, $r=1$ so that each block $\widetilde{\boldsymbol{A}}_\ell$ can only occur once in $\widetilde{\boldsymbol{A}}$.

\section{Discussion and Examples}
\label{section:tradeoff}
As mentioned previously, one of the goals of this paper is to find the trade-off between the number of outputs per antenna, $L$, and the number of inputs to the CPU, $T$, required for the overall receiver to be lossless. Assuming that our generalized architecture is equipped with a valid matrix $\boldsymbol{A}$, the trade-off between $L$ and $T$ comes directly from condition \eqref{eq:cond_wax}. To elaborate a bit further, we observe that with $L=1$, the number of CPU inputs becomes $T=T_\mathrm{max} \triangleq \left\lfloor M-\frac{M}{K}+1\right \rfloor$. 
In general, we can select $T$ as
$$T=\max\left(\left\lfloor M\frac{K-L}{K}+1\right \rfloor,K\right)$$
which is conceptually shown in the left part of Figure \ref{fig:tradeoff}.

It is interesting to observe that we reach a reduction compared with the centralized architecture also for $L=1$. This reduction comes about since we have allowed the antennas to perform multiplications, which leads to a reduction in the number of CPU inputs from $M$ to, at most, $T_\mathrm{max}$. The centralized architecture, illustrated in the left part of Figure 1, has the same number of outputs per antenna, namely $1$, but does not perform any multiplications. Therefore, the CPU must operate with $T=M$. 
If we let $L_{\mathrm{mult}}$ denote the number of multiplications per antenna, the centralized architecture corresponds to $L_{\mathrm{mult}}=0$, and we can select $T$ as
$$T= \left\{ \begin{array}{ll} M & L_{\mathrm{mult}} =0 \\ \max\left(\left\lfloor M\frac{K-L_{\mathrm{mult}}}{K}+1\right \rfloor,K\right) & L_{\mathrm{mult}} >0. \end{array} \right.$$
This is conceptually illustrated in the right part of Figure \ref{fig:tradeoff}.
\begin{figure}
     \centering
     \begin{subfigure}[b]{0.22\textwidth}
         \centering
         \includegraphics[scale=0.85]{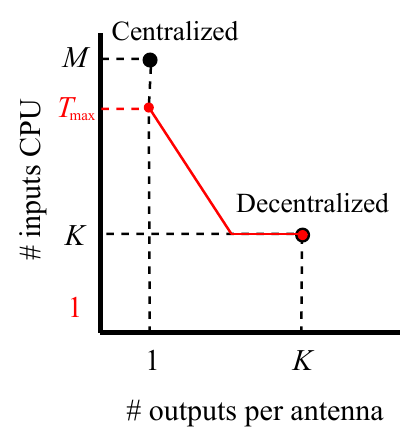}
     \end{subfigure}
     \hfill
     \begin{subfigure}[b]{0.22\textwidth}
         \centering
         \includegraphics[scale=0.85]{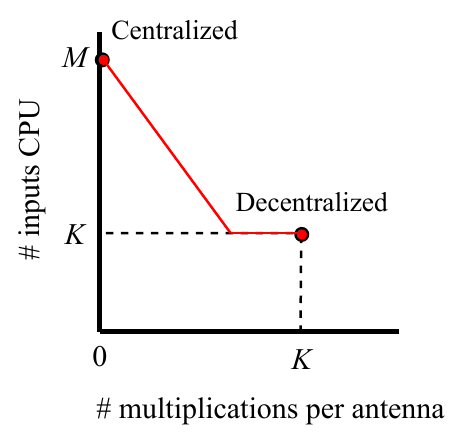}
     \end{subfigure}
        \caption{Number of inputs to the CPU v.s. number of outputs (left)/number of multiplications (right) per antenna.}
        \label{fig:tradeoff}
\end{figure}

We conclude this section with a few examples.
\begin{ex} 
Assume a design with a CPU limited to $T\leq 50$ inputs, and antenna modules with $L=2$ outputs. We now consider how many antennas $M$ and users $K$ the system can handle. From \eqref{eq:cond_wax}, we have  $50>M\frac{K-2}{K}$ implying that
$$50\frac{K}{K-2}>M.$$
To maximize the left hand side, save for the special case $K=L=2$ which allows for an unlimited number of antennas, we set $K=3$ and obtain $M<150$, so that we can at most use $149$ antennas. Or, put differently, if we choose to equip the base station with 149 antennas, we can at most serve $K=3$ users. With 150+ antennas, only 2 users can be served. Setting $K=4$, yields that at most 99 antennas can be used.
\end{ex}

We next provide two numerical examples of the WAX decomposition. The first one is meant to illustrate that it is indeed possible to obtain valid sparse matrices $\boldsymbol{A}$ and $\widetilde{\boldsymbol{A}}$ comprising only elements in the set $\{0,1\}$.
\begin{ex} Let $M=100$, $K=10$, and $L=4$. From Theorem 1, we have that $T>100\times 0.6=60$, so we take $T=61$. It can be numerically verified that the matrix
\begin{equation} \label{ex1} 
\widetilde{\boldsymbol{A}} = \left[\begin{array}{c} 
\boldsymbol{I}_{61} \\
\hline
\begin{array}{c|c}
\boldsymbol{I}_{39} & \begin{array}{c} \boldsymbol{I}_{22} \\ \hline \begin{array}{c|c} 
\boldsymbol{I}_{17} & \begin{array}{c} \boldsymbol{I}_5 \\ \boldsymbol{I}_5 \\ \boldsymbol{I}_5 \\ \hline \begin{array}{ccc} \boldsymbol{I}_2 & \boldsymbol{I}_2 & \boldsymbol{0}_{2\times 1} \end{array}
\end{array}
\end{array} 
\end{array}
\end{array}
\end{array}
\right]\end{equation}
is valid. We designed this $\widetilde{\boldsymbol{A}}$ by aiming at a minimum number of non-zero elements, while satisfying both Lemma \ref{lemma_blockrank} and Theorem \ref{rankrow}. The matrix $\boldsymbol{A}$ is obtained as $\boldsymbol{A}=\tilde{\boldsymbol{I}}_L\widetilde{\boldsymbol{A}}$, and it can be verified that $\boldsymbol{A}$ has 612 ones and 23788 zeros. Thus, merely 2.5\% of $\boldsymbol{A}$ is non-zero. 
\end{ex}

Our next example is providing the reader with a graphical illustration of the WAX decomposition.
\begin{ex} 
Let $M=8$, $K=5$, and $L=2$. Thus, $T>4.8$, so we select $T=5$. In this case, the number of variables is $TK+ML=41$ and the number of equations is $MK=40$; thus, we have precisely one more variable than equations. A particular example of the WAX decomposition is shown in \eqref{ex2} for the block version using $\widetilde{\boldsymbol{W}}$ and $\widetilde{\boldsymbol{A}}$ (expressed with $\boldsymbol{W}$ and $\boldsymbol{A}$, the $\boldsymbol{W}$ matrix would be twice as wide, and $\boldsymbol{A}$ twice as tall). The strength of the WAX decomposition is that for any $\boldsymbol{H}$, except for those in a set of measure 0, the matrix $\widetilde{\boldsymbol{A}}$ can be kept as it is, while only $\widetilde{\boldsymbol{W}}$ and $\boldsymbol{X}$ need to change.
\begin{figure*}[b]
\hrulefill
$$ $$
\begin{equation} \label{ex2}
\underbrace{\left[\begin{BMAT}{cccccccc}{cccccccc} \,1&-1&0&0&0&0&0&0 \\
                                          \,2&1&0&0&0&0&0&0 \\
                                       \,   0&0&1&1&0&0&0&0 \\
                                       \,   0&0&0&-1&0&0&0&0 \\
                                      \,    0&0&0&0&2&1&0&0 \\
                                      \,    0&0&0&0&2&-1&0&0 \\
                                      \,    0&0&0&0&0&0&2&-1 \\
                                      \,    0&0&0&0&0&0&2&2
\addpath{(0,6,1)rruulldd}
\addpath{(2,4,1)rruulldd}
\addpath{(4,2,1)rruulldd}
\addpath{(6,0,1)rruulldd}
\end{BMAT} \right]}_{\widetilde{\boldsymbol{W}}}
\underbrace{\left[ \begin{BMAT}{ccccc}{cccccccc}1&0&0&0&0 \\
                                          0&1&0&0&0 \\
                                          0&0&1&0&0 \\
                                          0&0&0&1&0 \\
                                          0&0&0&0&1 \\
                                          1&0&0&1&1 \\
                                          0&1&0&1&1 \\
                                          0&0&1&1&1
\end{BMAT} \right]}_{\widetilde{\boldsymbol{A}}} 
\underbrace{\left[ \begin{BMAT}{ccccc}{ccccc} 
-2&-1&-1&2&2 \\
1&-2&-1&1&2 \\
1&-1&-2&-1&-2 \\
0&2&0&-1&0 \\
0&-1&2&1&-2  
\end{BMAT} \right]}_{\boldsymbol{X}} =
\underbrace{\left[ \begin{BMAT}{ccccc}{cccccccc} 
-3&1&0&1&0 \\
-2&-4&-3&5&6 \\
1&1&-2&-2&-2 \\
0&-2&0&1&0 \\
-2&-2&5&4&-4 \\
2&-2& 3&0&-4 \\
1&-2&2&3&4 \\
4&-2&2&0&-8 
\end{BMAT} \right]}_{\boldsymbol{H}}
\end{equation}
\end{figure*}
\end{ex}

%
%

\section{Conclusions}
\label{section:conclusions}
We have presented a generalized architecture that allows to exploit the trade-off between centralized and decentralized massive MIMO architectures,  in terms of number of connections to the CPU and number of outputs per antenna. We have characterized said trade-off by defining a new matrix decomposition that allows for lossless linear equalization within our architecture, and deriving the conditions for it to work lossless.

Future work could include a characterization of the incurred losses if the number of CPU inputs goes below the limit for lossless processing.
\section*{Appendix A: Proof of Theorem \ref{WAXmain}}
We first make the observation that the rank of $\boldsymbol{A}$ can be no less than the rank of $\boldsymbol{H}$. Assuming that $M\geq K$, this implies  $T\geq K$, expressed as $T> \max(\cdot, K-1)$ in the statement.

We next provide  a lemma that will be useful.
\begin{lem} \label{lemma_t1}
Let $\boldsymbol{W}$ and $\bar{\boldsymbol{W}}$ be two matrices of the same form as $\boldsymbol{W}$ in \eqref{WAX}. Then, if $\boldsymbol{AX}=\bar{\boldsymbol{W}}^H \boldsymbol{H}$ is solvable such that $\|\bar{\boldsymbol{w}}_m\|^2 > 0, \; 1\leq m \leq M$, then $\boldsymbol{W}\boldsymbol{A}\boldsymbol{X}=\boldsymbol{H}$ is solvable.
\end{lem}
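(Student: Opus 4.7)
The plan is to use exactly the same $\boldsymbol{X}$ that already solves the hypothesized equation $\boldsymbol{A}\boldsymbol{X}=\bar{\boldsymbol{W}}^H\boldsymbol{H}$, and only search for $\boldsymbol{W}$. Substituting the hypothesis into the target $\boldsymbol{W}\boldsymbol{A}\boldsymbol{X}=\boldsymbol{H}$ rewrites the goal as $(\boldsymbol{W}\bar{\boldsymbol{W}}^H)\boldsymbol{H}=\boldsymbol{H}$, which I would enforce by demanding the stronger identity $\boldsymbol{W}\bar{\boldsymbol{W}}^H=\boldsymbol{I}_M$. This reduces the problem from one involving $\boldsymbol{H}$ and $\boldsymbol{A}$ to a pure linear-algebraic relation between the two block matrices.

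The next step is to exploit the shared structure of $\boldsymbol{W}$ and $\bar{\boldsymbol{W}}$ given in \eqref{eq:w}: each is $M\times ML$ with the $m$-th antenna's row supported only on columns $(m-1)L+1,\dots,mL$. A direct expansion then shows that $\boldsymbol{W}\bar{\boldsymbol{W}}^H$ collapses to an $M\times M$ diagonal matrix whose $(m,m)$ entry is the inner product $\boldsymbol{w}_m\bar{\boldsymbol{w}}_m^H$. Consequently, the matrix equation $\boldsymbol{W}\bar{\boldsymbol{W}}^H=\boldsymbol{I}_M$ decouples into $M$ independent scalar equations, one per antenna.

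Each scalar equation $\boldsymbol{w}_m\bar{\boldsymbol{w}}_m^H=1$ is then solved by the rescaling $\boldsymbol{w}_m=\bar{\boldsymbol{w}}_m/\|\bar{\boldsymbol{w}}_m\|^2$, which is well-defined precisely because the hypothesis supplies $\|\bar{\boldsymbol{w}}_m\|^2>0$ for every $m$. Assembling these $\boldsymbol{w}_m$ into a block-diagonal $\boldsymbol{W}$ and chaining back through the substitution yields $\boldsymbol{W}\boldsymbol{A}\boldsymbol{X}=\boldsymbol{W}\bar{\boldsymbol{W}}^H\boldsymbol{H}=\boldsymbol{H}$, which is the conclusion.

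I do not expect any real obstacle here: the content of the lemma is essentially that the nonzero-norm assumption on each $\bar{\boldsymbol{w}}_m$ is exactly what is needed to left-invert the action of $\bar{\boldsymbol{W}}^H$ by another matrix of the same block form. The only care required is bookkeeping, when expanding $\boldsymbol{W}\bar{\boldsymbol{W}}^H$, of which index ranges over the $M$ antennas and which ranges over the $L$ entries inside a single block, so that the diagonal collapse is justified rigorously.
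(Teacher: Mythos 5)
Your proposal is correct and is essentially the paper's own argument: the paper takes $\boldsymbol{W}=(\bar{\boldsymbol{W}}\bar{\boldsymbol{W}}^H)^{-1}\bar{\boldsymbol{W}}$, which is exactly your block-diagonal matrix with rows $\bar{\boldsymbol{w}}_m/\|\bar{\boldsymbol{w}}_m\|^2$, i.e.\ the left inverse of $\bar{\boldsymbol{W}}^H$ of the required form. You merely write out the diagonal collapse of $\boldsymbol{W}\bar{\boldsymbol{W}}^H$ explicitly where the paper states it compactly.
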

\begin{proof}
Suppose $\boldsymbol{A}\boldsymbol{X}=\bar{\boldsymbol{W}}^H \boldsymbol{H}$ is solvable such that $\|\bar{\boldsymbol{w}}_m\|^2 > 0, \; 1\leq m \leq M$. This implies that $\bar{\boldsymbol{W}}^H$ has left inverse $(\bar{\boldsymbol{W}}\bar{\boldsymbol{W}}^H)^{-1}\bar{\boldsymbol{W}}$. Thus,
$$(\bar{\boldsymbol{W}}\bar{\boldsymbol{W}}^H)^{-1}\bar{\boldsymbol{W}}{\boldsymbol{A}}{\boldsymbol{X}}=\boldsymbol{H}.$$
The lemma follows by observing that $(\bar{\boldsymbol{W}}\bar{\boldsymbol{W}}^H)^{-1}\bar{\boldsymbol{W}}$ is of the same form as $\boldsymbol{W}$, so we can take $\boldsymbol{W}=(\bar{\boldsymbol{W}}\bar{\boldsymbol{W}}^H)^{-1}\bar{\boldsymbol{W}}$.
\end{proof}

Let us now study $\boldsymbol{AX}=\bar{\boldsymbol{W}}^H \boldsymbol{H}$. Said matrix equation specifies $MT$ linear equations in $TK+ML$ variables, and, hence, solvable if $T>M(K-L)/K$. It remains to show that for randomly chosen $\boldsymbol{A}$ and $\boldsymbol{H}$, the solution satisfies $\|\bar{\boldsymbol{w}}_m\|^2>0, \, 1\leq m\leq M$. Let us define $\mathcal{V}$ as the set of admissible solutions, i.e., 
$$\mathcal{V}=\{\boldsymbol{A}, \boldsymbol{H} \, | \, \exists \bar{\boldsymbol{W}}^H, \boldsymbol{X}: \boldsymbol{AX}=\bar{\boldsymbol{W}}^H \boldsymbol{H}, \|\bar{\boldsymbol{w}}_m\|^2>0, \, \forall m\}$$

Assume the contrary, i.e., that there exists an $m_0$ such that $\|\bar{\boldsymbol{w}}_{m_0}\|^2=0$. The solution to $\boldsymbol{AX}=\bar{\boldsymbol{W}}^H \boldsymbol{H}$ depends on $S=TK+ML-MK$ free variables, here denoted by $\{z_s\}$. 
The solution $\bar{w}_{m_0,\ell}$ is a linear combination of the free variables $\{z_s\}$ where the weights depend on $\boldsymbol{A}$ and $\boldsymbol{H}$, i.e.,
$$\bar{w}_{m_0,\ell}=\sum_{s=1}^S c_{m_0,\ell,s}(\boldsymbol{A},\boldsymbol{H})z_s.$$
Now, invoking the assumption that $\|\bar{\boldsymbol{w}}_{m_0}\|^2=0$, implies that $\bar{w}_{m_0,\ell}=0, \, 1\leq \ell \leq L$ no matter the values of $\{z_s\}$. The only possibility for this is if the coefficients $c_{m_0,\ell,s}(\boldsymbol{A},\boldsymbol{H})=0, \, 1\leq \ell \leq L, \, 1\leq s \leq S$.

However, the coefficients $c_{m_0,\ell,s}(\boldsymbol{A},\boldsymbol{H})$ are rational expressions of the entries in $\boldsymbol{A}$ and $\boldsymbol{H}$, which means that in order for $c_{m_0,\ell,s}(\boldsymbol{A},\boldsymbol{H})=0$, a polynomial multi-variate expression in the entries in $\boldsymbol{A}$ and $\boldsymbol{H}$ must be 0. A standard result from Zariski topology states that whenever $\mathcal{V}\neq \emptyset$, the set $\mathcal{V}$ is open dense in $\mathbb{C}^{TK+ML}$. This, in turn, implies that a randomly chosen tuple $(\boldsymbol{A}, \boldsymbol{H})$ is in $\mathcal{V}$ with probability 1.

\section*{Appendix B: Proof of Theorem \ref{rankrow}}
From the structure of $\boldsymbol{B}_1$ in \eqref{eq:B1}, we observe that a particular row of $\widetilde{\boldsymbol{A}}$ appears exactly in $K$ rows of $\boldsymbol{B}$. Let the submatrix of $\boldsymbol{B}$ formed by all rows in $\boldsymbol{B}$ where the rows of $\widetilde{\boldsymbol{A}}_0$ appear, be denoted as $\boldsymbol{B}_0$. Clearly, to satisfy \eqref{eq:lin_wax}, we must in particular satisfy $\boldsymbol{B}_0 \boldsymbol{u}=\boldsymbol{0}_{RK\times 1}$. Now, $\boldsymbol{B}_0$ reads
$$\boldsymbol{B}_0=\left[\boldsymbol{I}_K \otimes \widetilde{\boldsymbol{A}}_0 \;\;  \widehat{\boldsymbol{H}}_0\right],$$
where $\widehat{\boldsymbol{H}}_0$ is formed from $\boldsymbol{H}$ as follows: Let $\iota(r)$ denote the block $\boldsymbol{H}_{\iota(r)}$ where the $r$th row in $\widetilde{\boldsymbol{A}_0}$ is taken from. Let $\boldsymbol{H}_0=\left[ \boldsymbol{H}_{\iota(1)}^T \; \boldsymbol{H}_{\iota(2)^T} \, \dots \, \boldsymbol{H}_{\iota(R)}^T \right]^T$, and let $\mathbb{I}\mathbb{I}(\ell)$ be an $R\times L$ matrix  with a single entry equal to 1 at row $\ell$ and column $(\iota(\ell) \;\mathrm{mod} \; L)+1$, and all other equal to 0. Then,
\begin{eqnarray} \label{hatH} \widehat{\boldsymbol{H}}_0&=&\left[\boldsymbol{0}_{\mathcal{D}_0} \;\; \boldsymbol{H}_{\iota(1)}^H \!\otimes\! \mathbb{I}\mathbb{I}(1) \; \; \boldsymbol{0}_{\mathcal{D}_1}  \;\; \;\boldsymbol{H}_{\iota(2)}^H \!\otimes\! \mathbb{I}\mathbb{I}(2) \; \right. \nonumber \\
&& \;\quad \quad \quad \left. \dots \quad \boldsymbol{0}_{\mathcal{D}_{R-1}}  \;\; \boldsymbol{H}_{\iota(R)}^H\! \otimes \!\mathbb{I}\mathbb{I}(R)  \;\; \boldsymbol{0}_{\mathcal{D}_R} \right]\end{eqnarray}
where we have used the shorthand notation 
\begin{eqnarray} 
\mathcal{D}_k&=&RK\times (\iota(k+1)-\iota(k))L^2, \nonumber \\
&& \iota(0)\triangleq 1,  \; \iota(R+1)\triangleq M/L. \nonumber 
\end{eqnarray}

To study the null space of $\boldsymbol{B}_0$ we may just as well study the null space of $(\boldsymbol{I}_K \otimes \boldsymbol{Q}_0^H)\boldsymbol{B}_0$, where $\boldsymbol{Q}_0\boldsymbol{R}_0 = \widetilde{\boldsymbol{A}}_0$ is the QR decomposition of $\widetilde{\boldsymbol{A}}_0$. We have, 
\begin{equation} \label{appb:qr}
(\boldsymbol{I}_K \otimes \boldsymbol{Q}_0^H)\boldsymbol{B}_0= \left[\boldsymbol{I}_K \otimes {\boldsymbol{R}}_0 \; \; (\boldsymbol{I}_K \otimes \boldsymbol{Q}_0^H)\widehat{\boldsymbol{H}}_0\right].\end{equation} Let $\kappa= \mathrm{rank}(\widetilde{\boldsymbol{A}}_0)$. The matrix $\boldsymbol{I}_K \otimes {\boldsymbol{R}}_0$ consequently has $K(R-\kappa)$ all-zero rows. 

If we extract said all-zero rows, we obtain,$$
\left[\boldsymbol{0}_{K\!(\!R-\kappa)\times TK} \; \boldsymbol{P}(\boldsymbol{I}_K \otimes \boldsymbol{Q}_0^H)\widehat{\boldsymbol{H}}_0\right] \!\!\begin{bmatrix}
\mathrm{vec}({\boldsymbol{X}})\\
\mathrm{vec}(\widetilde{\boldsymbol{W}}_1)\\
\vdots \\
\mathrm{vec}(\widetilde{\boldsymbol{W}}_N)
\end{bmatrix}\!=\! \boldsymbol{0}_{K\!(\!R-\kappa)\times 1} 
$$
where $\boldsymbol{P}$ is an $K(R-\kappa)\times KR$ matrix that extracts the rows where $\boldsymbol{I}_K \otimes {\boldsymbol{R}}_0$ is all-zero. This implies that we can discard $\boldsymbol{X}$ so that we, equivalently, obtain
\begin{equation} \label{imop2}  \boldsymbol{P}(\boldsymbol{I}_K \otimes \boldsymbol{Q}_0^H)\widehat{\boldsymbol{H}}_0\begin{bmatrix}
\mathrm{vec}(\widetilde{\boldsymbol{W}}_1)\\
\vdots \\
\mathrm{vec}(\widetilde{\boldsymbol{W}}_N)
\end{bmatrix} = \boldsymbol{0}_{K(R-\kappa)\times 1}.\end{equation}

We next note that due to the many all-zero columns in $\widehat{\boldsymbol{H}}_0$, represented by the terms $\boldsymbol{0}_{\mathcal{D}_k}$ in \eqref{hatH}, not all the $\boldsymbol{W}_n$ matrices matter. In fact, it can be straightforwardly verified that \eqref{imop2} is equivalent to
\begin{equation} \label{imop3} \boldsymbol{P}(\boldsymbol{I}_K \otimes \boldsymbol{Q}_0^H)\bar{\boldsymbol{H}}_0 \begin{bmatrix}
\widetilde{\boldsymbol{w}}_{\iota(1)} \\
\vdots \\
\widetilde{\boldsymbol{w}}_{\iota(R)} 
\end{bmatrix} = \boldsymbol{0}_{K(R-\kappa)\times ML},\end{equation}
where $\widetilde{\boldsymbol{w}}_{m}$ is the $1\times L$ vector formed from extracting the entries at the $m$th row of $\widetilde{\boldsymbol{W}}$ that are allowed to take non-zero values, and
\begin{equation} \label{barH} \bar{\boldsymbol{H}}_0=\left[ \boldsymbol{H}_{\iota(1)}^H \!\otimes\! \mathbb{I}(1) \;  \;\boldsymbol{H}_{\iota(2)}^H \!\otimes\! \mathbb{I}(2)  \;\dots \; \boldsymbol{H}_{\iota(R)}^H\! \otimes \!\mathbb{I}(R) \right],\end{equation}
where $\mathbb{I}(\ell)$ is the non-zero column of $\mathbb{I}\mathbb{I}(\ell)$.

For randomly chosen $\boldsymbol{H}$, the matrix $\boldsymbol{P}(\boldsymbol{I}_K \otimes \boldsymbol{Q}_0^H)\bar{\boldsymbol{H}}_0$ is full rank with probability 1. Therefore, \eqref{barH} only has a non-trivial solution whenever the number of unknowns is larger than the number of equations, i.e., whenever, $RL > K(R-\kappa)$. Consequently, 
$$\kappa > R\frac{K-L}{K}.$$ 

\bibliographystyle{IEEEtran}
\bibliography{IEEEabrv,wax}

\end{document}